\documentclass[letterpaper]{IEEEtran}

\usepackage[english]{babel}
\usepackage{amsmath,amssymb}
\usepackage{times}
\usepackage{relsize}
\usepackage{graphicx}
\usepackage{url}
\usepackage{booktabs}
\usepackage{multirow}
\usepackage{mparhack}
\usepackage{subfigure}
\usepackage{authblk}
\usepackage{amsthm}
\usepackage[noend]{algorithmic}
\usepackage[linesnumbered,ruled,vlined]{algorithm2e}

\usepackage{array}
\usepackage{cite}
\usepackage{eqparbox}
\usepackage{mdwmath}
\usepackage{balance}
\usepackage{epsfig}
\usepackage{xcolor}
\usepackage{xfrac}
\usepackage{mathtools}
\usepackage{bm}
\usepackage{amsfonts}
\usepackage[all]{xy}
\usepackage{etoolbox}
\usepackage{graphicx}
\pagenumbering{gobble}
\DeclareMathAlphabet{\mathantt}{OT1}{antt}{li}{it}
\DeclareMathAlphabet{\mathpzc}{OT1}{pzc}{m}{it}

%
%

\usepackage{accents}
\newcommand\munderbar[1]{%
  \underaccent{\bar}{#1}}

\newtheorem{theorem}{Theorem}

\setlength{\affilsep}{0.5em}

\DeclareFontFamily{OT1}{pzc}{}
\DeclareFontShape{OT1}{pzc}{m}{it}%
  {<-> s * [1.1] pzcmi7t}{}
\DeclareMathAlphabet{\mathpzc}{OT1}{pzc}%
                     {m}{it}

\makeatletter
\patchcmd{\@maketitle}
  {\addvspace{0.75\baselineskip}\egroup}
  {\addvspace{-1.45\baselineskip}\egroup}
  {}
  {}
\makeatother

\title{Accounting for Information Freshness in Scheduling of Content Caching}
\author{Ghafour Ahani~and~\and Di Yuan}
\affil{Department of Information Technology, Uppsala University, Sweden\\
Emails:\{ghafour.ahani, di.yuan\}@it.uu.se}

\begin{document}

\maketitle
\begin{abstract}
In this paper, we study the problem of optimal scheduling of content placement along time in a base station with limited cache capacity, taking into account jointly the offloading effect and freshness of information. We model offloading based on  popularity in terms of the number of requests and information freshness based on the notion of age of information (AoI). The objective is to reduce the load of backhaul links as well as the AoI of contents in the cache via a joint cost function. For the resulting optimization problem, we prove its hardness via a reduction from the Partition problem. Next, via a mathematical reformulation, we derive a solution approach based on column generation and a tailored rounding mechanism. Finally, we provide performance evaluation results showing that our algorithm provides near-optimal solutions.
\end{abstract}
\begin{IEEEkeywords}
Age of information,  base station, caching, time-varying popularity.
\end{IEEEkeywords}
\IEEEpeerreviewmaketitle

\section{Introduction}
Content caching at the network edge is considered to be an enabler for future wireless networks. This technique strives to mitigate the heavy burden on backhaul links via providing the users with their contents of interest from the network edge without the need of going to the core networks.

In designing effective caching strategies, previous works have focused on content popularity, whereas another important aspect is information freshness. Popularity of a content is defined as the number of users requesting the content. Popularity may vary over time\cite{8327582}. Thus, some contents may be added to or removed from the cache as they become popular or unpopular. Freshness of contents in the cache refers to how recent the content has been obtained from the core network. The longer a content is stored in the cache without an update, the higher risk is that the cached content becomes obsolete. Hence, we would like to refresh  the cached contents often, which however leads to higher load on the backhaul. Freshness of contents naturally arises in applications such as news, traffic information, etc., and it may have a great impact on user satisfaction. We model freshness of contents using the notion of age of information (AoI). For content caching, AoI is defined as the amount of time elapsed since the time that the content is refreshed. In this paper, we use a joint cost function to address the trade-off between the benefit of offloading via caching and AoI.

The works such as \cite{7562037,Cost2018Deng,6883600,7414014} took into account only the popularities of contents in designing cache placement strategies. The works in \cite{7562037,Cost2018Deng} considered content caching with known popularities of contents. The studies in \cite{6883600,7414014}
 showed that the popularites of contents can be estimated via learning-based algorithms.
However, in the mentioned works popularity of a content are time-invariant. In \cite{8357917,Zhang2018Using}, caching with time-varying popularity profiles are investigated. In \cite{Zhang2018Using} an algorithm is proposed to estimated the time-varying popularities of contents.
The studies in \cite{8000687,Tang2019} considered information freshness but not popularity of contents in their caching problems. Recently, a few works \cite{8006505,8006506,8795490} have considered both popularity and freshness of contents. However, these works have the following limitations. In \cite{8006505}, the downloading cost of contents from the server is neglected. In \cite{8006506}, only one content of the cache could be updated in each time slot. In \cite{8795490}, it is assumed that the cache capacity is unlimited.

In this paper, we study optimal scheduling of content caching along time in a base station (BS) with limited storage capacity taking into account jointly offloading via caching and freshness of contents. The objective is to mitigate the load of backhaul links via minimizing a penalty cost function related to  content downloading, content updating, and AoI costs subject to the cache capacity. The main contributions of this work are summarized as follows:
\begin{itemize}
\item
The caching scheduling problem is formulated as an optimization problem. Specifically, it is formulated as an integer linear program (ILP) and the hardness of the problem is proved based on a reduction from the Partition problem.
\item
Via a problem reformulation, a column generation algorithm  (CGA) is developed. We prove that the subproblem of CGA can be converted to a shortest path problem that can be solved in polynomial time. In addition, the CGA provides an effective lower bound (LB) of global optimum .
\item
The solution obtained from CGA could be fractional, thus an advanced and problem-tailored rounding algorithm (RA) is derived to construct integer solutions.

\item
Simulations show the effectiveness of our solution approach by comparing the obtained solutions to the LB as well as the conventional algorithms. Our algorithm provides solutions within $1\%$ of global optimum.
\end{itemize}

\section{System Scenario and Problem Formulation}

\subsection{System Scenario}\label{System_Scenario}
 The system scenario consists of a content server, a BS and a set of users $\mathcal{U}=\{1,2,\dots,U\}$  within the coverage of the BS. The server has all the contents, and the BS is equipped with a cache device of capacity $S$.
The contents are dynamic, i.e., the information they contain may change over time. Denote by $\mathcal{F}=\{1,2,\dots,F\}$ the set of the contents. We assume the server has always the up-to-date version of the contents.
Denote by $l_f$ the size of content $f$.
Each content is either fully stored or not stored at all at the BS. The system scenario is shown in Figure \ref{SystemScenario}.

\begin{figure}[ht!]
\centering
\includegraphics[scale=0.40]{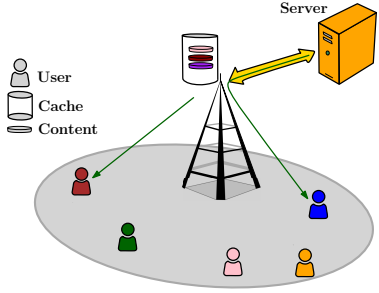}
\begin{center}
\caption{System scenario.}
\label{SystemScenario}
\end{center}
\end{figure}
We consider a slotted time system of $\mathcal{T}=\{1,2,\dots,T\}$ time slots. At the beginning of each time slot, the contents to be stored in the cache need to be determined by an updating/placement action. Namely, some stored contents may be removed from the cache, some contents may be added to the cache, and some contents may be re-downloaded from the server. The freshness of a content may decrease along time. We use AoI to model the freshness of contents. A content that is newly downloaded from the cache has AoI $0$, and for each time slot it remains in the cache without re-downloading, its AoI increases with one time slot. Denote by $p_f(i)$ the cost associated with an AoI of $i$ time slots for content $f$. A content has AoI $i$ time slots when the content has been stored in the cache for $i$ continuously time slots without any update.

In our model, user $u\in \mathcal{U}$ requests at most $R_u$ contents within the $T$ time slots based on its interest. The set of requests for user $u$ is denoted by $\mathcal{R}_u=\{1,\dots,R_u\}$. The downloading process of a content starts as soon as the request is made. The content can be downloaded either from the cache if the content is in the cache, or otherwise from the server. We assume the time of each request is known or can be predicted via using a prediction model, e.g., the one in \cite{Zhang2018Using}. For user $u$ and its $r$-th request, the requested content and the time slot of request are denoted by $h(u,r)$ and $o(u,r)$, respectively.

\subsection{Cost Model}
Denote by $x_{tf}$ a binary optimization variable which equals one if and only if the $f$-th content is stored in time slot $t$. Denote by $c_s$ and $c_b$ the costs for downloading one unit of data from the server and the cache to a user, respectively. We have $c_s>c_b$ to encourage downloading from the cache.
The downloading cost for user $u$ to obtain its $r$-th request, denoted by $C_{ur}$, is expressed as:
\begin{equation}
\begin{aligned}
C_{ur}=l_{h(u,r)}[c_bx_{o(u,r)h(u,r)}+c_s(1-x_{o(u,r)h(u,r)})].
\end{aligned}
\end{equation}
The downloading cost for completing all requests of all users, denoted by $C_{download}$, is $C_{download}=\sum_{u=1}^{U}\sum_{r=1}^{R_u}C_{ur}$.

 Denote by binary variable $a_{tfi}$, $i \in \{0,1,\dots,t-1\}$, whether or not content $f$ is in the cache and has AoI $i$ time slots. The overall AoI cost is expressed as:
\begin{equation}
\begin{aligned}
C_{AoI}=\sum_{f=1}^{F}\sum_{t=1}^{T}\sum_{i=1}^{t-1}p_f(i)m_{tf}a_{tfi},
\label{agecostf}
\end{aligned}
\end{equation}
where $m_{tf}$ is the number of users requesting content $f$ in time slot $t$. Updating contents in the cache incurs an updating cost.
The updating cost, denoted by $C_{update}$, is expressed as:
\begin{equation}
\begin{aligned}
C_{update}=\sum_{t=1}^{T}\sum_{f=1}^{F}l_f(c_s-c_b)a_{tf0},
\label{c_update}
\end{aligned}
\end{equation}
where $a_{tf0}$ means that the content is just downloaded from the server and has cost $l_f(c_s-c_b)$. Here $(c_s-c_b)$ is the downloading cost unit from the server to the cache. Finally, the total cost is denoted by $C_{total}$ and expressed as:
\begin{equation}
\begin{aligned}
C_{total}=C_{dowlad}+C_{update}+\lambda C_{AoI}.
\label{agecostf}
\end{aligned}
\end{equation}
Here, $\lambda$ is a weighting factor between $C_{AoI}$ and $C_{update}$. Larger $\lambda$ means frequently updating the contents of the cache and consequently smaller AoI for cached contents.
\subsection{Problem Formulation}
The update-enabled caching problem (UECP) is formulated as an ILP, and shown in (\ref{UECP}).
\vspace{-5mm}
\begin{figure}[!h]
\begin{subequations}
\begin{alignat}{2}
\text{(UECP)}~~ &\min\limits_{\bm{x},\bm{a}}\quad  C_{download}+C_{update}+\lambda C_{AoI}\\
\text{s.t}. \quad
& \sum_{f=1}^{F}x_{tf}l_f \leq S,t\in \mathcal{T}, \label{const:cacheSize}\\
& \sum_{i=0}^{t-1}a_{tfi}=x_{tf} ,t\in \mathcal{T}, f \in\mathcal{F}, \label{const:a1}\\
& a_{tfi}\ge x_{tf}+a_{(t-1)f(i-1)}-a_{tf0}-1 ,
\nonumber \\&t\in \mathcal{T}\setminus\{1\},f \in \mathcal{F},i\in\{1,\dots,t-1\}, \label{const:a2}\\
& x_{tf},a_{tfi}\in \{0,1\},t\in \mathcal{T},f\in \mathcal{F},
i \in \{0,\dots,t-1\}.
\end{alignat}
\label{UECP}
\end{subequations}
\end{figure}
\vspace{-7mm}
Constraints~(\ref{const:cacheSize}) indicate that used storage space is less than or equal to the cache capacity in each time slot.
Constraints~\eqref{const:a1} state that if the content is in the cache, it has to have one of the AoIs $0,\dots,t-1$.
Constraints~\eqref{const:a2} indicate content $f$ in time slot $t$ has AoI $i$ if and only if the content is in the cache in time slot $t$, has not AoI $0$ in time slot $t$, and has AoI $i-1$ in time slot $t-1$.

Even though this ILP can be solved by a standard solver, it needs significant computational time. Exploiting the structure of the problem, we develop an solution method based on column generation.

\subsection{Complexity Analysis}
\begin{theorem}
\textbf{UECP} is $\mathcal{NP}$-hard.
\end{theorem}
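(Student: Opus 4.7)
The plan is to give a polynomial-time reduction from the NP-complete Partition problem to UECP. An instance of Partition consists of positive integers $s_1, \ldots, s_n$ summing to $2B$, and asks whether some subset of the $s_i$'s sums to exactly $B$. From such an instance I would build a UECP instance with $F = n$ contents of sizes $l_f = s_f$, with $T = 2$ time slots, cache capacity $S = B$, and $U = M$ users (for a large constant $M$ polynomial in the input) each of which requests every content once in each of the two slots, so that $m_{tf} = M$ for all $t$ and $f$. The AoI penalties $p_f(i)$ would be chosen sufficiently large for $i \geq 1$ (e.g., exceeding $M c_s l_f$) so that any optimal schedule refreshes every cached content in every slot, collapsing the relation in (\ref{const:a1})--(\ref{const:a2}) to $a_{tf0} = x_{tf}$ at the optimum.

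Under this construction, the update and download terms combine to give a total cost of the form $4MBc_s \;-\; (M-1)(c_s - c_b)\sum_{t=1}^{2}\sum_{f=1}^{n} x_{tf}\, l_f$, where the first term is the baseline of serving every request from the server and the second term is the total savings from caching. Minimizing this is equivalent to maximizing $\sum_{t,f} x_{tf}\, l_f$ subject to the capacity constraints $\sum_{f} x_{tf}\, l_f \leq B$ for $t=1,2$, which decouple into two identical subset-sum subproblems with upper bound $B$. Since each slot is capped at $B$, the grand total is at most $2B$, and equality holds iff there is a subset of the $s_i$'s summing to exactly $B$: such a subset placed in slot $1$ with its complement in slot $2$ attains the bound, and conversely, attaining $2B$ forces each slot to cache a subset of total size exactly $B$. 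Hence the optimal UECP value hits the explicit threshold $4MBc_s - 2B(M-1)(c_s - c_b)$ iff the Partition instance is a yes-instance, yielding the reduction.

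The main obstacle is justifying that the UECP optimum truly has the ``fully refreshed, fully loaded'' structure assumed above, rather than benefiting from holding a content across slots with positive AoI or from leaving the cache underfilled. This requires choosing $M$ and the AoI penalties so that two inequalities hold simultaneously: caching a content with $m_{tf}\geq 2$ strictly beats not caching it (which needs $M-1 \geq 1$ combined with $c_s>c_b$), and refreshing any cached content strictly beats letting it age to AoI~$\geq 1$ (which is ensured by the lower bound on $p_f(i)$ above). Once these inequalities are in place, any deviation from ``choose $x_{tf}$ freely in each slot subject only to capacity, then set $a_{tf0}=x_{tf}$'' strictly increases the objective, and the Partition answer can be read off from whether the UECP optimum meets the threshold. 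Since the entire construction is polynomial in $n$ and the magnitudes of the $s_i$'s, $\mathcal{NP}$-hardness of UECP follows.
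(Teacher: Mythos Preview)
Your reduction from Partition is correct and follows the same core idea as the paper's proof: set content sizes equal to the Partition integers, set the cache capacity to half their sum, and arrange demand so that the net benefit of caching content $f$ is proportional to $l_f$, making the optimum hit a computable threshold iff some subset of sizes fills the cache exactly. The paper takes the simpler route of using a single time slot ($T=1$, with $m_{1f}=2$, $c_s=2$, $c_b=1$), which makes the AoI term vacuous and eliminates all of your parameter tuning for $M$ and $p_f(i)$; your $T=2$ construction and the ``complement in slot~2'' device are unnecessary (the same subset can be cached in both slots), but the argument remains valid.
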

\begin{proof}
The proof is established by a
polynomial reduction from the Partition problem that is $\mathcal{NP}$-complete \cite{garey1979computers}.
Consider a Partition problem with a set of $N$ integers, i.e., $\mathcal{N}=\{n_1,\dots,n_N\}$. The task is to decide whether it is possible to partition $\mathcal{N}$ into two subsets $\mathcal{N}_1$ and $\mathcal{N}_2$ with equal sum.

The reduction is constructed as follows.
We set the cache capacity as $S=\frac{1}{2}\sum_{i=1}^{N}n_i$, the set of contents to $\mathcal{F}=\{1,\dots,N\}$, size of content $f \in \mathcal{F}$ to $l_f=n_f$, and the number of time slots to one, i.e., $T=1$. As $T=1$, there is no updating or AoI costs. The time slots of all requests are set to $1$, i.e., $o(u,r)=1, u \in \mathcal{U}, r \in \mathcal{R}_u$. We set $m_{1f}=2$ for $f \in \mathcal{F}$, $c_s=2$, and $c_b=1$. By this setting, if the cache stores content $f$, $4l_f-l_f-2l_f=l_f$ gain is achieved. As the cache capacity is $S=\frac{1}{2}\sum_{i=1}^{N}n_i$, a maximum possible of $\frac{1}{2}\sum_{i=1}^{N}n_i$ gain can be achieved. Now, the question is if this maximum gain can be achieved. This question can be answered by solving UECP which also will answer the Partition problem. Hence the conclusion.
\end{proof}

\section{Reformulation of UECP}
We provide a reformulation of the problem that enables a CGA. We define the caching and updating decisions for content $f$ across the $T$ time slots as tuple $(\bm{x}_f,\bm{a}_f)$ in which
$\bm{x}_f=[x_{1f}, x_{2f},\dots,x_{Tf}]^ \mathrm{ T }$ and $\bm{a}_f=[a_{1f0}, a_{2f0},\dots,a_{Tf(t-1)}]^ \mathrm{ T }$.
In total, $3^T$ of such tuples exist and one of them is used in a solution. Denote by $\mathcal{K}=\{1,2,\dots,3^T\}$  the index set for all possible solutions. We refer to a possible solution as a column. The cost of column $k\in\mathcal{K}$ for content $f\in \mathcal{F}$ is denoted by $C_{fk}$ and can be calculated by the formula in (\ref{Cfk}).
\begin{equation}
\begin{aligned}
C_{fk}&=\sum_{t=1}^{T} l_{f}m_{tf}[c_bx_{tf}^{(k)}+c_s(1-x_{tf}^{(k)})]\\
&+\sum_{t=1}^{T}l_f(c_s-c_b)a^{(k)}_{tf0}+\lambda\sum_{t=1}^{T}\sum_{i=1}^{t-1}p_f(i)m_{tf}a_{tfi}^{(k)}.\label{Cfk}
\end{aligned}
\end{equation}
In~(\ref{Cfk}), $x_{tf}^{(k)}$ and $a^{(k)}_{tfi}$ are constants and represent the values of $x_{tf}$ and $a_{tfi}$ with respect to $k$-th column, respectively.
Now, ILP~\eqref{UECP} can be reformulated as \eqref{PR}.
\begin{figure}[!h]
\vskip -5pt
\begin{subequations}
\begin{alignat}{2}
~~~~~~~~~~&\min\limits_{\bm{w}}\quad  \sum_{f\in \mathcal{F}}\sum_{k\in \mathcal{K}}C_{fk}w_{fk} \label{MPC1}\\
\text{s.t}. \quad
&  \sum_{f\in \mathcal{F}}\sum_{k\in \mathcal{K}}l_fx^{(k)}_{tf}w_{fk} \leq S,t\in \mathcal{T} \label{MP_C1}\\
&\sum_{k\in \mathcal{K}}w_{fk}=1,f\in \mathcal{F}\\
& w_{fk}\in \{0,1 \},f\in \mathcal{F},k\in \mathcal{K}. \label{MP_C2}
\end{alignat}
\label{PR}
\vskip -20pt
\end{subequations}
\end{figure}

Here, $w_{fk}$ is a binary variable where $w_{fk}=1$ if and only if the $k$-th column of content $f$ is selected, otherwise it is zero. Constraints \eqref{MP_C1} are the cache capacity constraints, and constraints \eqref{MP_C2} indicate that only one of the columns is used.

\section{Algorithm Design}\label{alg_design}
In this section, we present our solution method which consists of two algorithms. Algorithm $1$ is a column generation algorithm (CGA) applied to the continuous version of \eqref{PR}. Algorithm $2$ is a rounding algorithm (RA) applied to the solution obtained from CGA if the solution is fractional. These algorithms are applied alternately until an integer solution is constructed.
The solution method is shown in Algorithm~$\ref{alg_CGAandERA}$.  The term RMP in the algorithm will be discussed later.

\begin{algorithm}\label{alg_CGAandERA}
\caption{CGA and RA}
\begin{algorithmic}[1]
\algsetup{linenosize=\tiny}
\small
\STATE STOP $\leftarrow 0$
\WHILE{(STOP$=0$)}
\STATE Apply CGA to RMP and obtain $\bm{w}^*$
\IF {($\bm{w}^*$ is an integer solution)}
\STATE STOP $\leftarrow 1$
\ELSE
\STATE Apply RA to $\bm{w}^*$
\ENDIF
\ENDWHILE
\end{algorithmic}
\end{algorithm}

\subsection{Column Generation Algorithm}
In column generation, the problem is decomposed into a so called master problem (MP) and a subproblem (SP). The algorithm starts with a subset of columns and solves alternately MP and SP.
Each time SP is solved a new column that possibly improves the objective function is generated. The benefit of CGA is to exploit the fact that at optimum only a few columns are used.

\subsubsection{MP and RMP}
MP is the continuous version of formulation \eqref{PR}. Restricted MP (RMP) is the MP but with a small subset $\mathcal{K}^\prime_f\subset\mathcal{K}$ for any content $f\in \mathcal{F}$. RMP is expressed in \eqref{RMP}. Denote by $K^\prime_f$ the cardinality of $\mathcal{K}^\prime_f$.
\begin{figure}[!h]
\vskip -5pt
\begin{subequations}
\begin{alignat}{2}
\text{(RMP)}~~~~~~~~~~&
\min\limits_{\bm{w}}\quad  \sum_{f\in \mathcal{F}}\sum_{k\in \mathcal{K}^\prime_f}C_{fk}w_{fk}
 \label{obj:RMP} \\
\text{s.t}. \quad
&  \sum_{f\in \mathcal{F}}\sum_{k\in \mathcal{K}^\prime_f}l_fx^{(k)}_{tf}w_{fk} \leq S,t\in \mathcal{T}, \label{RMP_cachecapa}\\
&\sum_{k\in \mathcal{K}^\prime_f}w_{fk} = 1,f\in \mathcal{F},\label{RMP_1col}\\
& 0\le w_{fk} \le 1,f\in \mathcal{F},k\in \mathcal{K}^\prime_f.
\end{alignat}
\label{RMP}
\vskip -20pt
\end{subequations}
\end{figure}

\subsubsection{Subproblem}
The SP uses the dual information to generate new columns.
Denote by $\mathbf{w}^*=\{w^*_{fk}, f\in \mathcal{F} ~\text{and}~ k\in \mathcal{K}^\prime_f\}$ the optimal solution of RMP.
Denote by $\bm{\pi}^*$ and $\bm{\beta}^*$ the corresponding optimal dual variables of \eqref{RMP_cachecapa} and \eqref{RMP_1col}, respectively, i.e.,
$\bm{\pi}^*=[\pi^*_1,\pi^*_2,\dots,\pi^*_{T}]^\mathrm{ T }$ and $\bm{\beta}^*=[\beta^*_1,\beta^*_2,\dots,\beta^*_F]^ \mathrm{ T }$.
After obtaining $\mathbf{w}^*$, we need to check whether $\mathbf{w}^*$ is the optimal solution of RMP.
This can be determined by finding a column with the minimum reduced cost for each content $f\in \mathcal{F}$. If all these values are nonnegative, the current solution is optimal. Otherwise, we add the columns with negative reduced cost to  corresponding sets.

Given $\bm{\pi}^*$ and $\bm{\beta}^*$ for content $f\in \mathcal{F}$, the reduced cost of column $(\bm{x}_f,\bm{a}_f)$ is $C_{f}-\sum_{t=1}^{T}l_f\pi^*_t x_{tf}-\beta^*_f$ where $C_{f}$ can be computed using expression~(\ref{Cfk}) in which constants $x_{tf}^{(k)}$ and $a_{tfi}^{(k)}$ are replaced with optimization variables $x_{tf}$ and $a_{tfi}$, respectively.
To find the column with minimum reduced cost for content $f\in \mathcal{F}$, we need to solve subproblem SP$_f$, shown in \eqref{SP}. Denote by $(\bm{x}^*_f,\bm{a}^*_f)$ the optimal solution of SP$_f$.
If the reduced cost of $(\bm{x}^*_f,\bm{a}^*_f)$ is negative, we add it to $\mathcal{K}^\prime_f$.

\begin{figure}[!h]
\vskip -5pt
\begin{subequations}
\begin{alignat}{2}
(\text{SP}_f)~~~~~~~&
\min\limits_{(\bm{x_f},\bm{a_f})}\quad  C_{f}-\sum_{t=1}^{T}l_f\pi^*_t x_{tf}-\beta^*_f \label{SP_objective}\\
\text{s.t}. \quad
& \sum_{i=0}^{t-1}a_{tfi}=x_{tf} ,t\in \mathcal{T}, f \in\mathcal{F}, \label{constSP:a1}\\
& a_{tfi}\ge x_{tf}+a_{(t-1)f(i-1)}-a_{tf0}-1 ,\nonumber \\
& t\in \mathcal{T}\setminus\{1\},f \in \mathcal{F},i \in\{1,\dots,t-1\}, \label{constSP:a2}\\
&a_{tfi}\le a_{(t-1)f(i-1)},t\in \mathcal{T}\setminus\{1\},f \in \mathcal{F},\nonumber \\
&i \in\{1,\dots,t-1\}, \label{constSP:a3}\\
& x_{tf} \in \{0,1\},t\in \mathcal{T},f\in \mathcal{F},\\
&a_{tfi}\in \{0,1\},t\in \mathcal{T},f\in \mathcal{F},i \in \{0,\dots,t-1\}.
\end{alignat}
\label{SP}
\vskip -20pt
\end{subequations}
\end{figure}
Even though \eqref{SP} is an ILP, in the following, we show that it can be solved as a shortest path problem using for example Dijkstra's algorithm\cite{Cormen2009introduction} in polynomial time.
\begin{theorem}
For content $f \in \mathcal{F}$, SP$_f$ can be solved in polynomial time as a shortest path problem.
\label{shortest_path}
\end{theorem}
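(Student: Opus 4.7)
The plan is to interpret an admissible schedule $(\bm{x}_f,\bm{a}_f)$ as a path through a layered directed acyclic graph in which every node encodes the cache status of content $f$ in a given time slot, and then argue that the reduced cost of the column decomposes additively along the edges of that graph.

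Concretely, I would build a graph $G_f=(V_f,E_f)$ with one layer per time slot. In layer $t\in\mathcal{T}$ I place nodes $v_{t,i}$ for $i\in\{0,1,\dots,t-1\}$ representing the event ``content $f$ is in the cache with AoI $i$,'' together with an extra node $\bar{v}_t$ representing ``content $f$ is not stored.'' A source $s$ and a sink $\tau$ are added and linked to the first and last layers respectively. The allowed transitions are exactly those dictated by constraints \eqref{constSP:a1}--\eqref{constSP:a3}: from $v_{t,i}$ one may go either to $v_{t+1,i+1}$ (the content stays without refresh) or to $v_{t+1,0}$ (refresh) or to $\bar{v}_{t+1}$ (evict); from $\bar{v}_t$ one may go either to $\bar{v}_{t+1}$ or to $v_{t+1,0}$ (fresh download). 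By inspection these are the only state transitions consistent with the definition of AoI.

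The key step is the cost assignment. Using the expression for $C_f$ substituted into the objective \eqref{SP_objective}, the reduced cost can be rewritten as a sum over time slots of terms that depend only on $(x_{tf},a_{tf\cdot})$ in the current slot (and, for the update term, on whether $a_{tf0}=1$). Hence I assign to an edge entering $v_{t,i}$ with $i\ge 1$ the cost $\lambda p_f(i)m_{tf}+l_f m_{tf}c_b-l_f\pi^*_t$; to an edge entering $v_{t,0}$ the cost $l_f(c_s-c_b)+l_f m_{tf}c_b-l_f\pi^*_t$; and to an edge entering $\bar{v}_t$ the cost $l_f m_{tf}c_s$. Edges out of $s$ carry the corresponding slot-$1$ cost minus $\beta^*_f$ (placed on the single source edge for accounting convenience), and edges into $\tau$ carry zero. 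Summing edge weights along an $s$--$\tau$ path exactly reproduces the objective of SP$_f$, so a minimum-reduced-cost column corresponds to a shortest $s$--$\tau$ path.

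Finally, $|V_f|=O(T^2)$ and $|E_f|=O(T^2)$, and since $G_f$ is acyclic with a natural topological order given by the time layers, a shortest path can be computed in $O(T^2)$ time (Dijkstra's algorithm also applies whenever all edge weights are made nonnegative by a uniform shift, which is harmless because all $s$--$\tau$ paths have the same number of layers). The main obstacle I anticipate is the cost-decomposition step: the term $\beta^*_f$ is a constant that must be attached to exactly one edge, the update term couples $a_{tf0}$ with the transition rather than with a single state, and the $\pi^*_t$ contribution is multiplied by $x_{tf}$ so it must be attributed to every incoming edge of every $v_{t,i}$ uniformly. Verifying that this bookkeeping is consistent across all layers, and checking that constraint \eqref{constSP:a3} (which forces at most one active AoI index per slot) is implicitly enforced by the fact that a path visits exactly one node per layer, is the technical heart of the proof.
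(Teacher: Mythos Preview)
Your construction is correct and matches the paper's proof almost exactly: both build a layered DAG whose nodes are the states ``not cached'' and ``cached with AoI $i$'' for each slot, with the same three transitions out of a cached state and two out of an uncached state, and both verify that the reduced cost decomposes additively over the layers so that a shortest $s$--$\tau$ path yields an optimal column. The only cosmetic differences are bookkeeping---the paper factors out the constant $C=\sum_t l_f m_{tf}c_s$ and phrases edge weights as savings $g_t$, and it places $-\beta^*_f$ on the sink arcs rather than the source arcs---but the graph, the bijection between paths and feasible $(\bm{x}_f,\bm{a}_f)$, and the $O(T^2)$ size bound are identical.
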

\begin{proof}
Consider content $f\in\mathcal{F}$. We construct an acyclic directed graph where finding the shortest path from the source to distention is equivalent to solving SP\text{$_f$}. The objective function~\eqref{SP_objective} can be rewritten as  \eqref{re_obj}. Denote by $C$ the total cost for downloading content $f$ via the server for all users requesting the content over all time slots, i.e., $C=\sum_{t=1}^{T}l_fm_{tf}c_s$. Denote by $v_{it}=p_f(i)m_{tf}$ the scenario where $m_{tf}$ users request content $f$ in time slot $t$ and the content has AoI $i$.
Denote by $c_1=l_f(c_s-c_b)$  the downloading cost from the server to the cache. Denote by $g_t=l_fm_{tf}(c_s-c_b)-l_f\pi^*_t$ the reduction in $C$ due to storing content $f$.

The graph is constricted as follows. Nodes $S$ and $D$ are used to represent the source and destination. Node $V_{00}$ is used to represent $x_0=0$. For time slot $t$, there are $t+1$ vertically aligned nodes. Using node $V_{t0}$ means that the content is not in the cache, and using node $V_{t1}^i$, $i \in \{0,\dots,t-1\}$, means that the content is in the cache and has AoI $i$. From node $S$ to $V_{00}$ there is an arc with weight $C$. For each node $V_{t0}$ there are two outgoing arcs one to $V_{(t+1)0}$ which means that the content is not stored in the next time slot and has weight $0$, and the other to $V_{(t+1)1}^{0}$ which has weight $c_1-g_t$ and means that the content is downloaded to the cache in the next time slot and has AoI $0$. For each node $V_{t1}^i$  there three outgoing arcs to $V_{(t+1)0}$, $V_{(t+1)1}^0$, and  $V_{(t+1)1}^{(i+1)}$, respectively. Using the first arc means that the content is deleted for the next time slot and has weight $0$. Using the second arc means the content is re-downloaded from the cache and has AoI~$0$ with weight $c_1-g_t$. Using the third arc means that the content is kept and its AoI increases with one unit and has weight $v_{(i+1)(t+1)}-g_{(t+1)}$. Finally, there are $T$ arcs from $V_{T0}$ and $V_{T1}^{i}$ for $i\in\{0,\dots,T-1\}$ to $D$ each with weight $-\beta_f$.

 Given any solution of \eqref{SP}, by construction of the graph, the solution directly maps to a path from the source to the destination with the same objective function.
 Conversely, given a path we construct an ILP solution. For time slot $t$, if flow is in node $V_{t0}$ then we set $x_{tf}=0$. If the flow is in $V_{t1}^i$, we set $x_{t1}=1$ and $a_{tfi}=1$. The resulting ILP solution has the same objective function value as length of the given path in terms of the arcs's weights. Hence the conclusion.
\end{proof}

\begin{figure*}[!t]
\normalsize
\begin{equation}
\label{re_obj}
\begin{aligned}
&C_{f}-\sum_{t=1}^{T}l\pi^*_t x_{tf}=\sum_{t=1}^{T} l_fm_{tf}[c_bx_{tf}+c_s(1-x_{tf})]
+\sum_{t=1}^{T}l_f(c_s-c_b)a_{tf0}+\sum_{t=1}^{T}\sum_{i=1}^{t-1}p_f(i)m_{tf}a_{tfi}-\sum_{t=1}^{T}l_f\pi^*_{tf} x_{tf}\\
&~~~~~~~~~~~~~~~~~~~~=\underbrace{\sum_{t=1}^{T}l_fm_{tf}c_s}_{C}+
\sum_{t=1}^{T}\left[\underbrace{l_f(c_s-c_b)}_{c_1}a_{tf0}+\sum_{i=1}^{t-1}\underbrace{p_f(i)m_{tf}}_{v_{it}}a_{tfi}- \underbrace{\left[l_fm_{tf}(c_s-c_b)-l_f\pi^*_{tf} \right]}_{g_t}x_{tf}\right].
\end{aligned}
\end{equation}
\hrulefill
\vspace*{-10pt}
\end{figure*}
\begin{figure*}
\centering
\includegraphics[scale=0.35]{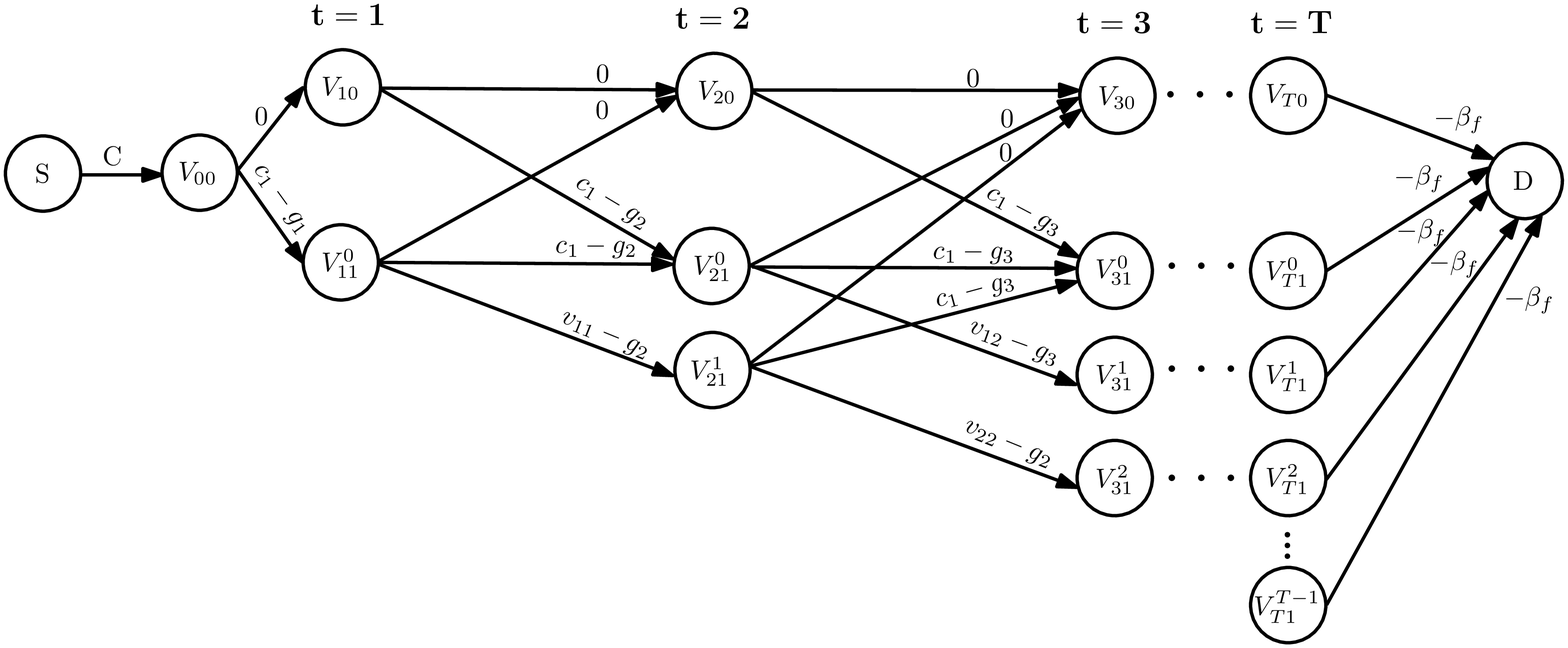}
\begin{center}
\vspace{-10mm}
\caption{Graph of the shortest path problem for subproblem.}
\label{SP1}
\end{center}
\vspace{-8mm}
\hrulefill
\end{figure*}

\begin{algorithm}\label{alg_CGA}
\caption{Column Generation Algorithm (CGA)}
\begin{algorithmic}[1]
\algsetup{linenosize=\tiny}
\small
\REQUIRE $S$, $c_b$, $c_s$, $\lambda$, $l_f$, $f \in \mathcal{F}$, $o(u,r)$ and $h(u,r)$, u $\in \mathcal{U},$ \\$r\in \mathcal{R}_u$
\ENSURE $\mathbf{w}^*$
\STATE $\mathcal{K}^\prime_f \leftarrow \{(\bf{0}^\mathrm{T},\bf{0}^\mathrm{T})\}$ for $f \in \mathcal{F}$
\STATE STOP $\leftarrow 0$
\WHILE{(STOP$=0$)}
\STATE Solve RMP and obtain $\mathbf{w}^*$, $\bm{\pi}^*$, and $\bm{\beta}^*$
\STATE STOP $\leftarrow 1$
\FOR{$f=1$ ~to~ $F$}
\STATE Solve SP$_f$ and obtain $(\bm{x}^*_f,\bm{a}^*_f)$
\IF {$C_{fk^*}-\sum_{t=1}^{T}l_f\pi^*_t x^*_{tf}-\beta^*_f <0$}
\STATE $\mathcal{K}^\prime_{f} \leftarrow \mathcal{K}^\prime_{f}\cup \{(\bm{x}^*_f,\bm{a}^*_f)\}$
\STATE STOP $\leftarrow 0$
\ENDIF
\ENDFOR
\ENDWHILE
\end{algorithmic}
\end{algorithm}

\subsection{Rounding Algorithm}
The solution of CGA could be fractional. Thus, we need a mechanism to construct integer solutions. We design a rounding algorithm (RA) to achieve this. RA repeatedly fixes the caching decisions of contents over time slots until an integer solution is constructed. The caching decision for content $f$ and time slot $t$ is determined based on value $z_{tf}$, defined as $z_{tf}=\sum_{k\in \mathcal{K}^\prime_f}x^{(k)}_{tf}w^*_{fk}$. This value indicates how likely it is optimal to store content $f$ in time slot $t$. In the following we prove a relationship between $\bm{z}$ and $\bm{w}$ and then give the RA.

\begin{theorem}
For any content $f\in \mathcal{F}$, $w^*_{fk}$ is binary for any $k$ if and only if every element of $\bm{z}_{f}=[z_{1f},z_{2f},\dots,z_{Tf}]$ is binary.
\label{IntegerTheory}
\end{theorem}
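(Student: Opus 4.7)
My plan is to handle the two directions of the equivalence separately, since they have quite different difficulty.

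For the forward direction, assume $w^*_{fk}\in\{0,1\}$ for every $k$. The equality constraint~\eqref{RMP_1col} then forces exactly one index $k^*$ to satisfy $w^*_{fk^*}=1$ while the rest vanish, and substituting into $z_{tf}=\sum_{k\in\mathcal{K}^\prime_f}x^{(k)}_{tf}w^*_{fk}$ gives $z_{tf}=x^{(k^*)}_{tf}\in\{0,1\}$. This is immediate and will not be the obstacle.

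For the backward direction, assume $z_{tf}\in\{0,1\}$ for every $t$. I will first pin down the $\bm{x}$-pattern of every column in the support of $w^*_f$. Since each $x^{(k)}_{tf}\in\{0,1\}$ and the weights $\{w^*_{fk}\}$ form a probability distribution over $\mathcal{K}^\prime_f$ by~\eqref{RMP_1col}, the value $z_{tf}$ is a convex combination of zeros and ones. If $z_{tf}=0$, nonnegativity forces $x^{(k)}_{tf}=0$ for every $k$ with $w^*_{fk}>0$; if $z_{tf}=1$, then rewriting as $\sum_k (1-x^{(k)}_{tf})w^*_{fk}=0$ and using nonnegativity again forces $x^{(k)}_{tf}=1$ for every such $k$. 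Ranging over $t$, every active column satisfies $\bm{x}^{(k)}_f=\bm{z}_f$.

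The remaining step is to argue that $\mathcal{K}^\prime_f$ contains only a single column realising this $\bm{x}$-pattern. Here I would appeal to the shortest-path construction of Theorem~\ref{shortest_path}: for any fixed $\bm{x}$-pattern, the $\bm{a}$-dependent portion of the reduced cost involves only $c_1$ and the $v_{it}$ values and is independent of the duals $\bm{\pi}^*,\bm{\beta}^*$, so the subproblem always selects the same cost-minimising $\bm{a}$ for that $\bm{x}$. With a deterministic tie-breaking rule inside the shortest-path routine, CGA never adds two columns sharing an $\bm{x}$-pattern, and in particular at most one column in $\mathcal{K}^\prime_f$ equals $\bm{z}_f$ on its $\bm{x}$-part. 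Combined with $\sum_k w^*_{fk}=1$, this yields $w^*_{fk}=1$ on that column and zero elsewhere, completing the backward direction.

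The main obstacle I anticipate is precisely this uniqueness step: absent a tie-breaking convention, two columns with identical $\bm{x}$ and equal $\bm{a}$-cost but distinct $\bm{a}$-patterns could both sit in $\mathcal{K}^\prime_f$, allowing $w^*$ to split its mass between them while still producing a binary $\bm{z}_f$. My plan resolves this by folding tie-breaking into SP$_f$; an equivalent formulation would be to identify columns that agree on $\bm{x}$ before stating the characterization.
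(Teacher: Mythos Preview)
Your argument follows the same route as the paper's: the forward direction is immediate, and the backward direction runs through the convex-combination observation that a binary $z_{tf}$ forces $x^{(k)}_{tf}$ to coincide across all $k$ in the support of $w^*_f$, so every active column carries the same $\bm{x}$-part; distinctness of the columns in $\mathcal{K}^\prime_f$ then collapses the support to a singleton.

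Where you differ is in the care taken on the last step. The paper, after establishing that all active columns share the same $\bm{x}$-pattern, simply asserts that ``all columns corresponding to $w^*_{fk}$ for $k \in K_f^{\prime\prime}$ must be the same'' and invokes the convention that no two columns in $\mathcal{K}^\prime_f$ are identical. It does not discuss the possibility that two columns agree on $\bm{x}$ but differ on $\bm{a}$. You correctly flag this gap and close it by appealing to the structure of SP$_f$: since the $\bm{a}$-dependent portion of the reduced cost is independent of the duals, a deterministic shortest-path routine will never generate two columns with the same $\bm{x}$-part. This is an implicit assumption the paper relies on without stating; your treatment makes it explicit, and your alternative remedy (identify columns by their $\bm{x}$-part) is exactly what is needed for the theorem to hold as written.
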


\begin{proof}
For necessity, for any content $f\in \mathcal{F}$, if $w^*_{fk}$ is binary for any $k$, $k\in \mathcal{K}^\prime_f$, it is obvious from the definition that all elements of $\bm{z}_{f}$ are binary.
Now, we prove the sufficiency.
For any content $f\in \mathcal{F}$, assume that every element in $\bm{z}_{f}$ is binary. Assume that
$w_{fk}^*>0$ for $k \in K_f^{\prime\prime}\subseteq K_{f}^\prime$, then $z_{tf}=\sum_{k\in \mathcal{K}_f^{\prime\prime}}x^{(k)}_{tf}w^*_{fk}$.
To satisfy that element $\bm{z}_{tf}$ for $t \in \mathcal{T}$ is binary, elements $x_{tf}^{(k)}$ for $k \in \mathcal{K}^{\prime\prime}$  either are all zero or all one. Otherwise, as $\sum_{k\in\mathcal{K}^{\prime\prime}}w^*_{fk}=1$, one of the $z_{tf}$ becomes fractional. This means that all columns corresponding to $w^*_{fk}$ for $k \in K_f^{\prime\prime}$ must be the same.  Having two vectors with the same values violates the condition that the columns of any two $w^*_{fk}$ are different. Therefore, for any content $f$, $f\in \mathcal{F}$, if $z_{tf}$ is binary for any $t$, $t\in \mathcal{T}$, then $w^*_{fk}$ is an binary for any $k$, $k\in \mathcal{K}^\prime_f$. Hence the proof.
\end{proof}

RA consists of three main steps which are shown in Algorithm~\ref{alg_ERA}. First, for content $f \in \mathcal{F}$ in time slot $t\in \mathcal{T}$, the decision is to store the content if $z_{tf}=1$. All columns that do not comply with this caching decision will be discarded. These are done by Lines $2$-$3$. Second, the element of $\bm{z}$ being closest to zero or one is found and rounded. Based on the rounding outcome, the caching decision is determined and non-complying columns are discarded. These are done via Lines~$4$-$16$. Finally, the algorithm fixes the decisions of the contents across the time slots to zero if there is no remained spare space in the cache to store them in these time slots. This is done by Lines~$20$-$23$. The caching decisions made until now will be remained fixed in all subsequent iterations. Note that with these fixings SP\text{$_f$} can still be solved as a shortest problem. If $x_{tf}$ is set to $0$, nodes $V_{t1}^i$ for $i\in\{0,\dots,t-1\}$ and their connected arcs will be deleted from the graph. If $x_{tf}$ is set to $1$, node $V_{t0}$ and its connected arcs will be deleted.
\vspace{-1mm}
\begin{algorithm}\label{alg_ERA}
\caption{Rounding Algorithm (RA)}
\begin{algorithmic}[1]
\algsetup{linenosize=\tiny}
\small
\REQUIRE $\bm{w}^*$ and $(\bm{x},\bm{a})$
\STATE Compute $\bm{z}=\{z_{tf}, t \in \mathcal{T}, f \in \mathcal{F}\}$ where $z_{tf}=\sum_{k\in \mathcal{K}^\prime_f}x^{(k)}_{tf}w^*_{fk}$
\STATE  Fix $x_{tf}=1$ in SP$_f$ if $z_{tf}=1$, $t \in \mathcal{T}, f \in \mathcal{F}$ \label{fixxto1}
\STATE  Fix $w_{fk}=0$ in RMP if $x^{(k)}_{tf}=0$, $k \in \mathcal{K}^\prime_f, t \in \mathcal{T}, f \in \mathcal{F}$\label{fixyto0}
\STATE $\munderbar{z}\leftarrow\underset{t\in\mathcal{T}, f\in \mathcal{F}~~~~~~~~~~~~~~~~~~~~~~~~~~~} {\min\{z_{tf}| z_{tf}>0~\text{and}~z_{tf}<1\}}$\label{minz}
\STATE $(\munderbar{t},\munderbar{f})\leftarrow\underset{t\in\mathcal{T}, f\in \mathcal{F}~~~~~~~~~~~~~~~~~~~~~~~~~~~~~~~} {\arg\min\{z_{tf}| z_{tf}>0~\text{and}~z_{tf}<1\}}$\label{minzloc}

\STATE $\bar{z}\leftarrow\underset{t\in\mathcal{T}, f\in \mathcal{F}~~~~~~~~~~~~~~~~~~~~~~~~~~~~~~~~} {\min\{1-z_{tf}| z_{tf}>0~\text{and}~z_{tf}<1\}}$\label{maxz}
\STATE $(\bar{t},\bar{f})\leftarrow\underset{t\in\mathcal{T}, f\in \mathcal{F}~~~~~~~~~~~~~~~~~~~~~~~~~~~~~~~~~~~~} {\arg\min\{1-z_{tf}| z_{tf}>0~\text{and}~z_{tf}<1\}}$\label{maxzloc}

\IF {$(\munderbar{z}<\bar{z})$} \label{nearestcheck}
\STATE Fix $x_{\munderbar{t}\munderbar{f}}=0$  in SP$_{\munderbar{f}}$\label{fixxunderbarto0}
\STATE  Fix $w_{\munderbar{f}k}=0$ if $x^{(k)}_{\munderbar{t}\munderbar{f}}=1$, $k \in \mathcal{K}^\prime_{\munderbar{f}}$\label{fixyunderbarto0}

\ELSIF{$(l_{\bar{f}}\le S^\prime)$}
\STATE Fix $x_{\bar{t}\bar{f}}=1$ in SP$_{\bar{f}}$\label{fixxbarto1}
\STATE  Fix $w_{\bar{f}k}=0$ if $x^{(k)}_{\bar{t}\bar{f}}=0$, $k \in \mathcal{K}^\prime_{\bar{f}}$\label{fixybarto0}
\ELSE
\STATE Fix $x_{\bar{t}\bar{f}}=0$  in SP$_{\bar{f}}$\label{fixxbarto0}
\STATE  Fix $w_{\bar{f}k}=0$ if $x^{(k)}_{\bar{t}\bar{f}}=1$, $k \in \mathcal{K}^\prime_{\bar{f}}$\label{fixybarto01}
\ENDIF
\FOR{$t=1$ ~to~ $T$}
\STATE $\mathcal{F}^\prime \leftarrow \{f \in \mathcal{F}| x_{tf} \text{ is set to one}\}$
\STATE $S^\prime\leftarrow S-\sum_{f \in \mathcal{F}^\prime}l_f$
\FOR{$f \in \mathcal{F}\backslash \mathcal{F}^\prime$}
\IF {$l_f> S^\prime$}\label{fixto0bysize1}
\STATE Fix $x_{tf}=0$ in SP$_{f}$
\STATE  Fix $w_{fk}=0$ in RMP if $x^{(k)}_{tf}=1$, $k \in \mathcal{K}^\prime_{f}$ \label{fixto0bysize2}
\ENDIF
\ENDFOR
\ENDFOR
\end{algorithmic}
\end{algorithm}

\section{Performance Evaluation}\label{sec:performance}
We compare CGA to the LB and two conventional caching algorithms: random-based algorithm (RBA) \cite{7959865} and popularity-based algorithm (PBA) \cite{Ahlehagh2014Video}.
Both algorithms treat contents one by one.
In RBA, the contents are considered randomly, but with
respect to their total numbers of requests; a content with higher number of requests will be more likely selected for caching.
In PBA, popular contents, i.e., contents with higher number of requests, will be considered first. For the content under consideration, if the content was not in the cache in the previous time slot, it is downloaded with AoI zero. Otherwise, if AoI cost has reached fifty percent of downloading cost, the content is re-downloaded. Otherwise, the content is kept and the AoI increases by one.

The content popularity distribution is modeled by a ZipF distribution\cite{KarthikeyanShanmugam2013}, i.e., the probability that a user requests the $f$-th content is $\frac{f^{-\gamma}}{\sum_{i \in \mathcal{F}}i^{-\gamma}}$. The popularities of contents are changed randomly across the time slots.
We set $U=600$, $F=200$, and $T=24$ with length of one hour for each time slot \cite{8691020}. The sizes of files are uniformly generated within interval $[1,10]$. The cache capacity is set as $S=\rho \sum_{f \in \mathcal{F}}l_f$. Here, $\rho \in [0,1]$ shows the size of cache in relation to the total size of all contents. The number of requests for each user is randomly generated in $[1,15]$.

The performance results are reported in Figures~$\ref{impact_U}\text{-}\ref{impact_lambda}$.
The deviation from global optimum is bounded by the deviation from the LB, as LB is always less than or equal to the global optimum. We refer to the deviation from LB as optimality gap. The CGA provides solutions within $1\%$ gap from LB and outperforms the conventional algorithms. Figure~\ref{impact_U} shows the impact of $U$. When $U$ increases from $400$ to $800$ the cost nearly linearly increases, however, the optimality gap of algorithms decreases. The reason is that with larger number of users, more contents from the content set are requested by users. As the cache capacity is limited, the only way to get many of requested contents is from the server by all algorithms which leads to a lower optimality gap.

Figure~\ref{impact_F} shows the impact of $F$. Recall that the cache capacity is set to $50\%$ of the total size of the files. For CGA, when $F=50$ the capacity of cache is extremely limited and as $F$ is small, almost all contents will be requested by users. These together imply that many requests need to be satisfied from the server which leads to a high cost. When $F$ increases to $150$, the cost decreases. Because as $F$ increases the cache capacity increases, and CGA is able to efficiently utilize the cache capacity. However when $F$ further increases to $300$, the cost increases. The reason is that even though the capacity increases with $F$ but the diversity of requested contents becomes too large, and consequently some of them need to be satisfied from the server which leads to a higher cost.

Figure~\ref{impact_lambda} shows the impact of $\lambda$. Recall that larger $\lambda$ means higher backhaul load but smaller AoI. From the figure, it can be seen that when $\lambda$ grows, PBA and RBA push down the average AoI of contents to almost zero but incur substantial amount of load on the backhaul. In contrast, the solutions of CGA achieve a much better balance between the backhaul load and AoI of contents with respect to $\lambda$.
Note that the backhaul load and average AoI are normalized to interval $[0,100]$. 


\vspace{-3mm}
\begin{figure}[ht!]
\centering
\includegraphics[scale=0.400]{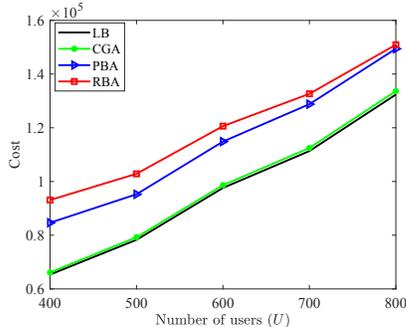}
\begin{center}
\vspace{-6mm}
\caption{Impact of $U$ on cost when $T=24$, $F=200$, $\lambda=0.5$, $\rho=0.5$, $\gamma=0.54$, $c_s=10$,\text{~and~}$c_b=1$.}
\label{impact_U}
\end{center}
\end{figure}
\vspace{-10mm}
\begin{figure}[ht!]
\centering
\includegraphics[scale=0.40]{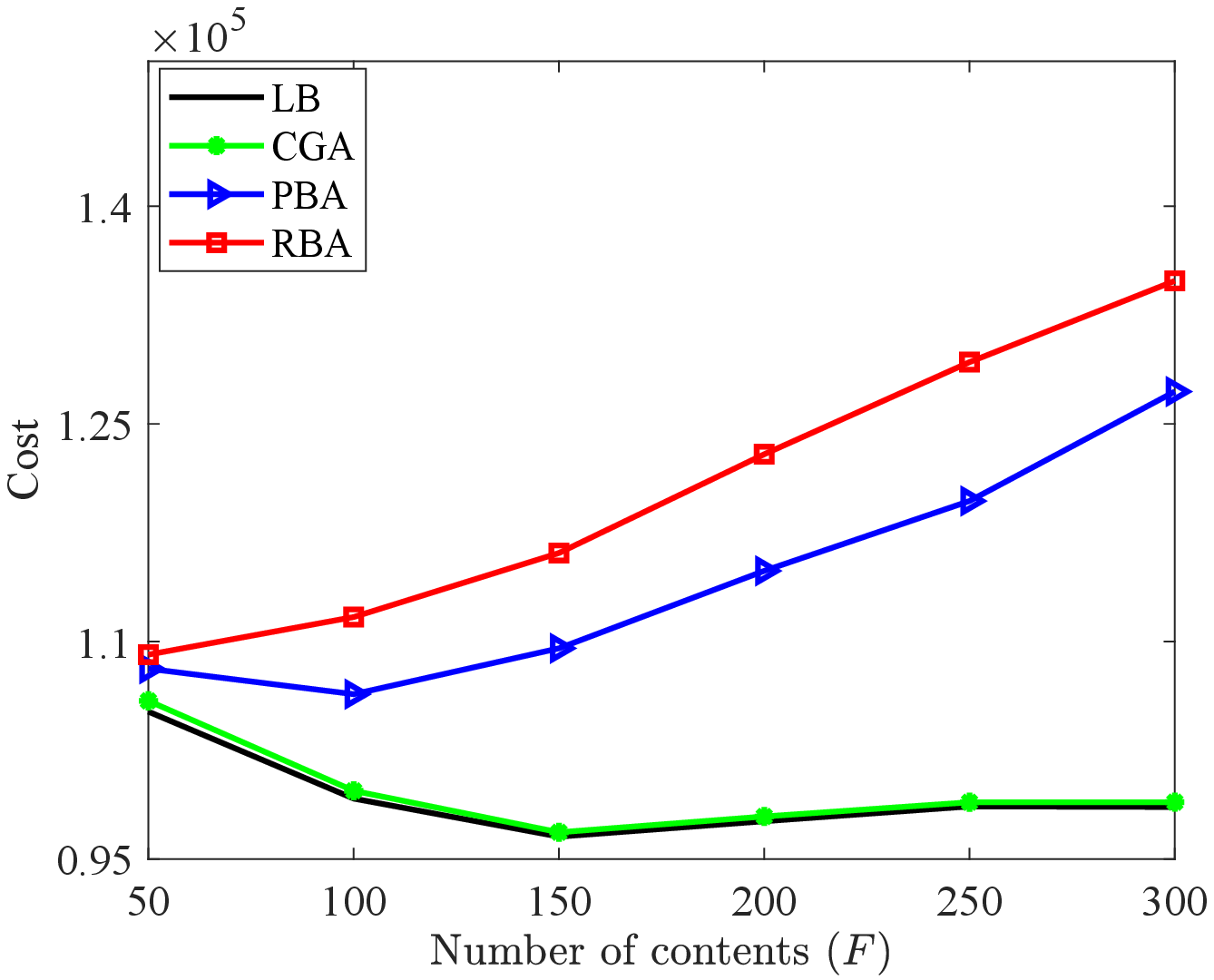}
\begin{center}
\vspace{-6mm}
\caption{Impact of $F$ on cost when $T=24, U=600$, $\lambda=0.5$, $\rho=0.5$, $\gamma=0.54$, $c_s=10$,\text{~and~}$c_b=1$.}
\label{impact_F}
\end{center}
\end{figure}
\vspace{-10mm}
\begin{figure}[ht!]
\centering
\includegraphics[scale=0.40]{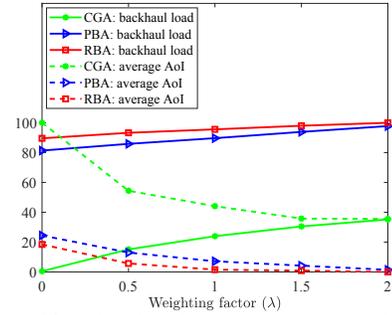}
\begin{center}
\vspace{-6mm}
\caption{Impact of $\lambda$ on backhaul load and average AoI when $T=24, U=600$, $F=200$, $\rho=0.5$, $c_s=10$,\text{~and~}$c_b=1$.}
\label{impact_lambda}
\end{center}
\end{figure}

\section{Conclusions}\label{sec:conclo}
This paper has investigated scheduling of content caching along time where jointly offloading effect and freshness of the contents are accounted for. The problem is formulated as an ILP and $\mathcal{NP}$-hardness of the problem is proved. Next, via a mathematical reformulation, a solution approach based on column generation and a rounding mechanism is developed. Via the joint cost function, it is possible to address the trade-off between the updating and AoI costs. The numerical results show that our algorithm is able to balance between the two costs. Simulation results demonstrated that our solution approach provides near-optimal solutions.

\bibliographystyle{IEEEtran}
\bibliography{IEEEabrv,ForIEEEBib}
\end{document}